\newtheorem{theorem}{Theorem}[section]
\newtheorem{claim}{Claim}[theorem]
\newtheorem{prop}[theorem]{Proposition}
\newtheorem{lemma}[theorem]{Lemma}
\newtheorem{definition}{Definition}[section]
\newtheorem{notation}{Notation}[section]
\theoremstyle{remark}
\title{On principal types and well-foundedness of the cummulativity relation in ECC}
\author{
Eitetsu Ken\footnote{
Eitetsu Ken,the University of Tokyo, Graduate School of Mathematical Sciences,Japan, \texttt{yeongcheol-kwon@g.ecc.u-tokyo.ac.jp}}, 
Masaki Natori\footnote{
Masaki Natori,the University of Tokyo, Graduate School of Mathematical Sciences,Japan, \texttt{natori-masaki616@g.ecc.u-tokyo.ac.jp}},
Kenji Tojo\footnote{
Kenji Tojo, the University of Tokyo, Department of Information Science, Faculty of Science, Japan,
\texttt{kenjitojo@is.s.u-tokyo.ac.jp}},
Kazuki Watanabe\footnote{Kazuki Watanabe, the Graduate University for Advanced Studies, Japan\newline
\texttt{kazukiwatanabe@nii.ac.jp}}}
\begin{document}
\maketitle
\begin{abstract}
    When we investigate a type system, it is helpful if we can establish the well-foundedness of types or terms with respect to a certain hierarchy, and the Extended Calculus of Constructions (called $ECC$, defined and studied comprehensively in \cite{Luo}) is no exception. However, under a very natural hierarchy relation 
   (called the cumulativity relation in \cite{Luo}),
     the well-foundedness of the hierarchy
      does not hold generally.
    In this article,we show that the cumulativity relation is well-founded if it is restricted to one of the following two natural families of terms:
   
    \begin{enumerate}
     \item types in a valid context
     \item terms having normal forms
    \end{enumerate}
    Also, we give an independent proof of the existence of principal types in $ECC$ since it is used in the proof of well-foundedness of cumulativity relation in a valid context although it is often proved by utilizing the well-foundedness of the hierarchy, which would make our argument circular if adopted.
\end{abstract}

\section{Introduction}
 \quad When we investigate a type system, it is helpful to prove the well-foundedness of a certain hierarchy on types or terms, and this strategy is taken in \cite{Luo}, too, in order to study the Extended Calculus of Constructions (called $ECC$).
 However, for a very natural hierarchy relation (namely, the one which is called the cumulativity relation in \cite{Luo}), the well-foundedness does not hold globally.
 This fact was already noticed by Luo.Z in 2001 (unpublished), and at the same time, it was clarified by him that Lemma 3.5 and Corollary 3.8 in \cite{Luo} were incorrect. However, the two statements are used only for establishing the existence of principal types of objects in a valid context, and he was made aware of the fact that it can actually be proved without utilizing the two incorrect statements.
 Later, in 2019, the authors rediscovered counterexamples for the two statements (which were exactly the same as those Luo. Z had been made aware of in 2001) without the knowledge of his unpublished works, and found some variations of well-foundedness of types or terms which are indeed true, and proved them on our own. \\
 \quad In this article, first we will give counterexamples to Lemma 3.5 and Corollary 3.8 in \cite{Luo} . Then, we prove that the cumulativity relation is indeed well-founded if it is restricted to either one of the following two natural families of terms:
 \begin{enumerate}
  \item types in a valid context
  \item terms having normal forms
 \end{enumerate}
 
 Furthermore, following Luo's unpublished work stated above, we give a precise proof of the existence of principal types in $ECC$ since it is used in the proof of our first version of well-foundedness (it plays an important role in proving quasi-normalisation) although it is often proved by utilizing the well-foundedness of the cumulativity relation, which would make our argument circular if adopted.

\section{Acknowledgement}
\quad The authors are really grateful to Professor Luo. Z for giving us valuable comments and sharing his unpublished works. We also would like to thank Fukuda. Y for his helpful comments to improve this article.
% Finally, we thank the Graduate School of Mathematical Sciences, the University of Tokyo, for providing rooms for our seminar.

\section{Proofs of well-foundedness of $\preceq$ with additional condition}
\quad Following the notation of \cite{Luo}, we assume $\epsilon$ as an environment.\\
\null \quad In order to show the well-foundedness of the cumulativity relation, the following statement would play a key role if it could be established (since the claim would reduce the well-foundedness of $\preceq$ to that of $\preceq_i$);

\begin{claim}
 Let $A_1, A_2, B_1,B_2$ be terms, and $i \in \omega$. Then,
 \begin{align*}
 A_1 \preceq A_2 \prec_i B_1 \preceq B_2 \Rightarrow A_1 \preceq_i A_2 \& B_1 \preceq_i B_2. 
 \end{align*}
\end{claim}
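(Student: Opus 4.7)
The plan is to proceed by induction on the structures of the two cumulativity derivations, with an outer case analysis on the form of the strict intermediate step $A_2 \prec_i B_1$.

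First, I would analyse what $A_2 \prec_i B_1$ forces structurally. In ECC, the single-level strict cumulativity $\prec_i$ should be generated by the universe-lifting axiom at level $i$ (i.e.\ $\mathrm{Type}_i \prec_i \mathrm{Type}_{i+1}$), propagated through the $\Pi$-congruence rule (covariant on the codomain, invariant on the domain). So $A_2$ and $B_1$ must both decompose as matching nested dependent products whose innermost types are $\mathrm{Type}_i$ and $\mathrm{Type}_{i+1}$ respectively, possibly up to $\beta$-conversion. This is a strong constraint on the shape of $A_2$ and $B_1$.

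Next, I would argue, by induction on the derivation of $A_1 \preceq A_2$, that this chain can be re-expressed as an $A_1 \preceq_i A_2$ derivation. The key sub-claim is that no cumulativity step in this chain can involve a universe strictly above $\mathrm{Type}_{i+1}$: any $\beta$-conversion step is automatically an $\preceq_i$-step, and any universe-lifting step must be compatible with the exposed level of $A_2$, since otherwise the $\Pi$-congruence skeleton reaching $A_2$ could not be matched. The argument for $B_1 \preceq B_2$ is completely symmetric, with the roles of $\mathrm{Type}_i$ and $\mathrm{Type}_{i+1}$ swapped.

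The main obstacle, I expect, lies in the interaction between $\beta$-conversion and universe-lifting within a single $\preceq$-derivation. A $\preceq$-derivation may freely interleave $\beta$-reductions with universe-lifting steps, and a term can display several distinct cumulativity ``faces'' after $\beta$-reducing its subterms; controlling which universes appear in $A_1 \preceq A_2$ from the shape of the endpoint $A_2$ alone seems to require a postponement or factorisation lemma separating $\beta$-conversion from universe-lifting steps. Such a lemma is nontrivial and may well fail in full generality; indeed, given the introduction's remarks that global well-foundedness of $\preceq$ does not hold and that Lemma 3.5 and Corollary 3.8 of \cite{Luo} admit counterexamples, I would not be surprised if the present claim turns out to have a counterexample along precisely those lines, in which case the hoped-for reduction from $\preceq$ to $\preceq_i$ must be replaced by a genuinely different strategy such as restricting to typable terms or to terms with normal forms, as the authors announce.
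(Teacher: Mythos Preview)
The claim is \emph{false}, and the paper does not prove it: immediately after stating the claim, the authors write ``However, as for $ECC$, the claim does not hold'' and exhibit an explicit counterexample. With
\[
C \equiv \Sigma x:(\Sigma y:Prop.\,Prop).\,Prop,\quad
A \equiv \Sigma x:(\Sigma y:Prop.\,Type_0).\,Prop,\quad
B \equiv \Sigma x:(\Sigma y:Prop.\,Type_0).\,Type_0,
\]
one has $C \preceq A \prec_1 B$ but $C \preceq_1 A$ fails. So your proof plan cannot succeed, and your closing suspicion is the correct one.

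Two concrete errors in your structural analysis explain why the induction you sketch breaks down. First, you misread the index: in Luo's notation $\preceq_i$ is stratified by the \emph{nesting depth} of $\Pi/\Sigma$-congruences used, not by a universe level; $\preceq_0$ already relates all universes, and $\preceq_{i+1}$ allows one more layer of $\Pi$ or $\Sigma$ congruence on top of $\preceq_i$. Your description of $\prec_i$ as ``generated by $Type_i \prec_i Type_{i+1}$'' is therefore off. Second, and decisively, you restrict attention to $\Pi$-types with an invariant domain, omitting $\Sigma$-types entirely. In $ECC$ the cumulativity rule for $\Sigma$ is covariant in \emph{both} components, so a single $\Sigma$-type can be reached by strict steps of different depths in different components. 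That is precisely what the counterexample exploits: $C \preceq A$ requires depth $2$ (the step is inside the first component), while $A \prec_1 B$ is a depth-$1$ step in the second component. No amount of postponement or factorisation of $\beta$-steps will repair this, since the counterexample terms are already normal.
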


However, as for $ECC$, the claim does not hold; 
\begin{prop}
Let
  \begin{align*}
  C &:= \Sigma x : (\Sigma y: Prop. Prop).Prop = (Prop \times Prop) \times Prop \\
  A &:= \Sigma x : (\Sigma y: Prop. Type_0).Prop = (Prop \times Type_0) \times Prop \\
  B &:= \Sigma x : (\Sigma y: Prop. Type_0).Type_0 = (Prop \times Type_0) \times Type_0 \\
  \end{align*}
 Then, $C \preceq A \prec_1 B$, but $ C \preceq_1 A $ does not hold.
\end{prop}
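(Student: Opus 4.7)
The plan is to verify this concrete counterexample in three stages. First I would compute the principal types of $C$, $A$, and $B$ in $ECC$; second I would exhibit explicit derivations witnessing $C \preceq A$ and $A \prec_1 B$; and third I would rule out $C \preceq_1 A$ by identifying a level mismatch. Most of the work is routine bookkeeping; the interesting point is the last step.

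For the principal types, since $Prop : Type_0$, the inner $\Sigma$-type $\Sigma y : Prop.\, Prop$ appearing in $C$ lies in $Type_0$, and hence so does $C$ itself: its principal type is $Type_0$. On the other hand, $\Sigma y : Prop.\, Type_0$ lies in $Type_1$ because its codomain $Type_0$ is an element of $Type_1$, and consequently both $A$ and $B$ have principal type $Type_1$. The key structural feature is that the cumulativity step from $Prop$ to $Type_0$ distinguishing $C$ from $A$ occurs inside a subterm whose principal type jumps from $Type_0$ in $C$ to $Type_1$ in $A$, whereas the analogous step between $A$ and $B$ happens at a position whose ambient type is $Type_1$ on both sides.

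The positive claims are then routine: $C \preceq A$ is derived by applying the base rule $Prop \preceq Type_0$ at the middle position and propagating upward by two applications of $\Sigma$-congruence (with reflexivity on the unchanged components), while $A \prec_1 B$ follows from a single application of $Prop \prec Type_0$ at the outermost rightmost position, lifted once through $\Sigma$-congruence. The subscript $1$ on $\prec$ is justified because both $A$ and $B$ live at level $Type_1$.

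For the negative claim, by the definition of $\preceq_i$ every term appearing in a derivation of $X \preceq_i Y$ must itself live at level $i$; since $C$ has principal type $Type_0$ and not $Type_1$, no derivation of $C \preceq_1 A$ can be built, so the claim fails. I expect the main obstacle to be writing out precisely which formulation of $\preceq_i$ from Luo's framework is in force and then verifying that it does rule out such cross-level comparisons (i.e. that being promotable via $\preceq$ into $Type_1$ is strictly weaker than having principal type $Type_1$); once this is pinned down, the failure of $C \preceq_1 A$ follows immediately and the counterexample is complete.
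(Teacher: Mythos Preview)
Your argument rests on a misreading of the relation $\preceq_i$. In Luo's definition the subscript $i$ is a \emph{depth index} for the inductive clauses (how many times you are allowed to descend through a $\Pi$- or $\Sigma$-binder), not a universe level; in particular $\preceq_i$ is a purely syntactic relation on arbitrary terms and has nothing to do with principal types or with the universe in which the terms live. Thus the claim ``every term appearing in a derivation of $X \preceq_i Y$ must itself live at level $i$'' is simply false, and the computation of principal types of $C$, $A$, $B$ is irrelevant to the proposition. Your justification ``the subscript $1$ on $\prec$ is justified because both $A$ and $B$ live at level $Type_1$'' is likewise the wrong reason, even though the conclusion $A \prec_1 B$ is correct.

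The paper's argument for the negative claim is the one you need: assume $C \preceq_1 A$. Since $C$ and $A$ are distinct normal $\Sigma$-types, Church--Rosser forces the $\Sigma$-clause of $\preceq_1$, which requires the first components to satisfy
\[
\Sigma y:Prop.\,Prop \;\preceq_0\; \Sigma y:Prop.\,Type_0.
\]
But $\preceq_0$ only relates terms that are convertible or that are (convertible to) universes, and these two normal $\Sigma$-types are neither convertible nor universes; contradiction. No typing information enters at all.
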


 \begin{proof}
  By definition, $C \preceq A \preceq_1 B$ holds. Also, by the Church-Rosser theorem, $A$ and $B$ are not computationally equal. \\
  \quad Suppose $ C \preceq_1 A $ holds. By the Church-Rosser theorem, 
  it turns out that the case (c) of the definition of $\preceq_1$ is applied, 
  hence, using the Church-Rosser theorem again, we get
  \begin{align*}
  \Sigma y: Prop. Prop \preceq_0 \Sigma y: Prop. Type_0 .
  \end{align*} 
  
  However, since the both of them are in their normal forms, this is impossible because of the Church-Rosser theorem and the definition of 
  $\preceq_0$.
 \end{proof}

 Does the well-foundedness hold, then? Considering arbitrary terms of $ECC$, unfortunately it fails:
  
\begin{prop}
 There exists an infinite decreasing sequence of terms $A_0 \succ A_1 \succ \ldots$.
\end{prop}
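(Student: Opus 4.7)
The plan is to construct explicitly an infinite sequence $\{A_n\}_{n \in \omega}$ of ECC pseudo-terms with $A_{n+1} \prec A_n$, exhibiting the required descending chain $A_0 \succ A_1 \succ \cdots$. The decisive observation is that the cumulativity relation $\preceq$ is defined on all pseudo-terms of ECC, not only typable ones, so we may employ a non-normalising pseudo-term. Using the untyped fixed-point combinator $Y$, set $R := Y(\lambda y. \Sigma x : Type_0. y)$, so that $R$ $\beta$-reduces to $\Sigma x : Type_0. R$, giving the conversion $R \equiv \Sigma x : Type_0. R$. Define $A_0 := R$ and $A_{n+1} := \Sigma x : Prop. A_n$ for $n \geq 0$.

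For the weak inequality $A_{n+1} \preceq A_n$, I would proceed by induction on $n$. The base case $A_1 \preceq A_0$, that is $\Sigma x : Prop. R \preceq R$, follows from the conversion $R \equiv \Sigma x : Type_0. R$ together with the $\Sigma$-cumulativity rule and the universe step $Prop \preceq Type_0$. The inductive step is immediate from the $\Sigma$-cumulativity rule applied at the outer $\Sigma$-type.

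The non-trivial part is strictness, $A_n \not\preceq A_{n+1}$. The base case $R \not\preceq \Sigma x : Prop. R$ is the heart of the matter: if it held, then by conversion closure one could replace $R$ on the left by its reduct $\Sigma x : Type_0. R$, obtaining $\Sigma x : Type_0. R \preceq \Sigma x : Prop. R$; inverting the $\Sigma$-cumulativity rule --- justified by the Church-Rosser theorem, since $Type_0$ and $Prop$ are distinct normal forms --- forces $Type_0 \preceq Prop$, contradicting the definition of universe cumulativity. The cases for larger $n$ then follow by inversion at the outer $\Sigma$-type together with this base case, exactly as in the proof of Proposition 3.2.

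The main obstacle I anticipate is justifying the $\preceq$-inversion lemma in the presence of conversion closure and a non-normalising term: namely, that $\Sigma x : A. B \preceq \Sigma x : A'. B'$ entails $A \preceq A'$ and $B \preceq B'$ after passing to suitable reducts. This relies on the Church-Rosser theorem for ECC pseudo-terms and a careful analysis of how the cumulativity rules compose through $\beta$-conversion closure; once this lemma is in place, the descending chain follows with no further work, and the counterexample naturally motivates the restriction to terms having normal forms in the second well-foundedness result.
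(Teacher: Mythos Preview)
Your construction is essentially the paper's: both manufacture a pseudo-term $R$ with $R \simeq \Sigma x{:}Type_0.\,R$ and then take $A_{n+1} := \Sigma x{:}Prop.\,A_n$. The paper simply uses the direct self-application $\alpha\alpha$ with $\alpha := \lambda y.\,\Sigma x{:}Type_0.\,yy$ rather than the full $Y$ combinator (note that your $Y(\lambda y.\,\Sigma x{:}Type_0.\,y)$ in fact $\beta$-reduces to exactly this $\alpha\alpha$), and it leaves the strictness step implicit, citing Church--Rosser without spelling out the argument. Your route to strictness via $A_n \not\preceq A_{n+1}$ and the $\Sigma$-inversion lemma is sound: the inversion you flag as the obstacle does go through, since reduction on $\Sigma$-terms is purely structural, so Church--Rosser yields the needed component-wise conversions even when the body is non-normalising. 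One cosmetic point: ECC lambdas carry type annotations, so both your $Y$ and the paper's $\alpha$ should strictly be written with dummy domain types; this does not affect $\beta$-reduction or the argument.
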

 
\begin{proof}
 Let a term $\alpha$ be
 \begin{align*}
  \alpha := \lambda y. (\Sigma x : Type_0. yy) 
 \end{align*}
 Then,
 \begin{align*}
  \alpha \alpha \simeq \Sigma x : Type_0. (\alpha \alpha)
 \end{align*}
 holds.
 Using this and the Church-Rosser theorem again and again,
  we can get the following infinite decreasing sequence.
 \begin{align*}
  \alpha \alpha 
  &\simeq \Sigma x : Type_0. (\alpha \alpha)\\
  &\succ \Sigma x : Prop. (\alpha \alpha)\\
  &\simeq \Sigma x : Prop. (\Sigma x : Type_0. (\alpha \alpha))\\
  &\succ \Sigma x : Prop. (\Sigma x : Prop. (\alpha \alpha)) \ldots
 \end{align*}
\end{proof}

However, the term $\alpha$ above is clearly not ``well-typed." 
 Hence, imposing some regularity on the terms, we can get the desired well-foundedness.
 We will show two specific cases in which the well-foundedness actually hold.\\
 \null \quad The first one restricts the terms to $\epsilon$-types:

%
%\begin{theorem} (well-foundedness) \\
%There is no infinite decreasing sequence of $\epsilon$-types $A_0 \succ A_1\succ\dots A_n\succ\dots$.
%\end{theorem}

\begin{theorem} (well-foundedness) \\
There is no infinite decreasing sequence of $\epsilon$-types $A_0 \succ A_1\succ\dots A_n\succ\dots$.
\end{theorem}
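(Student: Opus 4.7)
The plan is to reduce the claim to well-foundedness on the natural numbers in three steps: normalisation, structural analysis of $\preceq$ on normal forms, and a natural-number measure.

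First, I would invoke the quasi-normalisation of $\epsilon$-types (established later in the paper as a consequence of the principal type function) to replace each $A_i$ by its normal form $A'_i$. Since $\preceq$ is closed under $\simeq$ and $\prec$ denotes its strict part, the sequence $A'_0 \succ A'_1 \succ \dots$ is again an infinite strict chain of normal-form $\epsilon$-types; hence it suffices to rule out such a chain among normal forms.

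Next, I would establish the key structural lemma: whenever $A, B$ are normal $\epsilon$-types with $A \preceq B$, they share a common \emph{skeleton}, meaning there exist a context $C[\,]$ with $k$ holes and natural numbers $n_1, \dots, n_k, m_1, \dots, m_k$ with $n_j \leq m_j$ such that $A \equiv C[Type_{n_1}, \dots, Type_{n_k}]$ and $B \equiv C[Type_{m_1}, \dots, Type_{m_k}]$. The proof is by writing $A \preceq B$ as a finite chain of $\preceq_i$-steps, normalising the intermediate terms (permissible since each $\preceq_i$ is $\simeq$-closed), and then inducting on the rules defining $\preceq_i$, each of which either replaces a single $Type_n$ with $Type_{n+1}$ or descends into a $\Pi$- or $\Sigma$-constructor.

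Finally, assign the measure $\mu(A'_i) = n_1^{(i)} + \dots + n_k^{(i)} \in \omega$ to each term in the chain. The strict relation $A'_i \succ A'_{i+1}$ forces $n_j^{(i)} \geq n_j^{(i+1)}$ for every $j$ with strict inequality at some index, whence $\mu(A'_i) > \mu(A'_{i+1})$, and an infinite strict descent in $\omega$ is impossible. The principal obstacle will be the structural lemma: one must carefully justify that intermediate terms in a $\preceq$-derivation can be replaced by their normal forms without altering the relation, and then verify that no primitive rule of $\preceq_i$ can change the non-universe skeleton. If a direct syntactic argument proves unwieldy, one can fall back on a lexicographic measure built from the principal type level at the root together with the measures of the immediate type-subterms of $\Pi$ and $\Sigma$ formers, exploiting the principal type machinery developed later in the paper.
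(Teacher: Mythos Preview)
Your approach is correct in outline but differs from the paper's. The paper's proof is a short structural induction on the quasi-normal form of $A_0$: after replacing each $A_n$ by its quasi-normal form, one case-splits on whether $A_0$ is a universe, a base term, a $\Pi$-type, or a $\Sigma$-type; in the two compound cases the infinite $\succ$-chain projects (via the standard inversion property of $\preceq$) to an infinite $\succ$-chain on a structurally smaller component, contradicting the induction hypothesis---no skeleton lemma and no numerical measure are needed. Your measure-via-skeleton idea is instead essentially the strategy the paper adopts for its \emph{second} well-foundedness result (Theorem~\ref{nfwf}, for arbitrary normalisable terms), where a multiplicative rank $\varphi$ is built; your additive sum of indices works only because you first pin down a common skeleton across the whole chain. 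Two wrinkles to watch: you invoke quasi-normalisation but then reason about full normal forms---quasi-normal forms may still contain redexes (e.g.\ inside $\Pi$-domains or base terms), so your skeleton argument genuinely relies on strong normalisation, a heavier hypothesis than the paper's proof needs; and you should treat $Prop$ as $Type_{-1}$ so that the index vector lives in a set bounded below and your sum is well-founded.
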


\begin{proof}
%We can use the quasi-normalisation theorem because its proof depends only on the existence of principal types and we can prove this without well-foundedness (see the next section). 
We can assume every $A_n (n\in \omega)$ is a quasi-normal form by \cite{Luo}. We prove this lemma by induction on the structure of quasi-normal form $A_0$.\\
%Thus, we can assume every $A_n (n\in \omega)$ is a quasi-normal form. We prove this lemma by induction on the structure of quasi-normal form $A_0$.\\
Case(1): If $A_0\equiv$ $U$(universe), it is obvious that the statement holds. \\
Case(2): Suppose $A_0$ is a base term and if $A_0\succeq A_1$ , then $A_0\simeq A_1$ by the property of the cumulativity relation.\\
Case(3): Assume $A_0\equiv$ $\Pi x:A_0.B_0$ and $A_0 \succ A_1\succ\dots A_n\succ\dots$ holds. By the property of the cumulativity relation, every $A_n$ is a form of $\Pi x:B_n. C_n$ such that $C_0\succ C_1\succ \dots \succ C_n \succ\dots$ , which contradicts the induction hypothesis.\\
Case(4): Assume $A_0\equiv$ $\Sigma x:A_0.B_0$ and $A_0 \succ A_1\succ\dots A_n\succ\dots$ holds. By the property of the cumulativity relation, every $A_n$ is a form of $\Sigma x:B_n. C_n$. By the assumption and the property of the cumulativity relation, $B_0 \succ B_1\succ \dots B_n\succ \dots$ or $C_0 \succ C_1\succ \dots C_n\succ \dots$ hold, which contradicts the induction hypothesis.

\end{proof}

The next one restricts the terms to just normalisable ones (hence, the result depends only on the syntax of terms and the definition of the type cumulativity. Especially, the proof system does not matter):

 \begin{theorem}\label{nfwf}
  There is no infinite decreasing sequence of terms $\{ A_n\}_{n \in \omega}$ such that each $A_n$ has a normal form. 
 \end{theorem}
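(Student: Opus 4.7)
The plan is to mimic the inductive structure of the proof of the preceding theorem on $\epsilon$-types, but to replace the appeal to quasi-normalisation (which depended on typability) with the stronger syntactic hypothesis that each $A_n$ already admits a normal form. First I would argue that, without loss of generality, every $A_n$ is \emph{in} normal form. The cumulativity relation is closed under computational equality, so replacing each $A_n$ by its (unique, by Church-Rosser) normal form $A_n^{\mathrm{nf}}$ preserves $A_n^{\mathrm{nf}} \succeq A_{n+1}^{\mathrm{nf}}$; and strictness is preserved too, since $A_n^{\mathrm{nf}} \simeq A_{n+1}^{\mathrm{nf}}$ would force $A_n^{\mathrm{nf}} \equiv A_{n+1}^{\mathrm{nf}}$ by Church-Rosser, contradicting $A_n \succ A_{n+1}$.

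Next I would proceed by induction on the structure of the normal form $A_0$, using the following componentwise characterisation of $\preceq$ between normal forms, which is read off from the defining clauses of $\preceq$ via Church-Rosser: the top-level constructor agrees on both sides, and the relation descends componentwise into subterms that are themselves normal. If $A_0$ is a universe, the descending chain is finite because the universe hierarchy below $A_0$ is a finite linear order. If $A_0$ is not a universe and is not headed by $\Pi$ or $\Sigma$ (e.g.\ a variable, application, projection, $\lambda$, or pair), then $X \preceq A_0$ between normal forms forces $X \equiv A_0$, ruling out strict descent. If $A_0 \equiv \Pi x : D_0.\, C_0$, then every $A_n$ has the shape $\Pi x : D_n.\, C_n$ in normal form with $D_n \equiv D_0$ (since the domain of a $\Pi$-type is invariant under $\preceq$ in $ECC$) and $C_0 \succeq C_1 \succeq \cdots$, and the strictness transfers into the codomain sequence, contradicting the induction hypothesis on $C_0$.

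The remaining case $A_0 \equiv \Sigma x : D_0.\, C_0$ is the one I expect to require the most care. Here every $A_n$ is of the form $\Sigma x : D_n.\, C_n$ in normal form with $D_0 \succeq D_1 \succeq \cdots$ and $C_0 \succeq C_1 \succeq \cdots$, and at each step at least one of the two descents is strict. If the domain chain is strict infinitely often, extracting that subsequence of $\{D_n\}$ gives an infinite strictly descending chain of normal forms, contradicting the induction hypothesis on $D_0$; otherwise, from some index on, $D_n \simeq D_{n+1}$, so the strictness lies in the codomain, giving an infinite strictly descending chain of normal forms contradicting the induction hypothesis on $C_0$. The main obstacle I anticipate is not the structural induction itself but the careful Church-Rosser based componentwise analysis of $\preceq$ on normal forms; once that characterisation is in hand, every case of the induction closes routinely.
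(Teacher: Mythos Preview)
Your argument is correct, but it is \emph{not} the route the paper takes for this theorem. The paper does not proceed by structural induction on the normal form of $A_0$. Instead it builds an explicit rank function $\varphi:\mathcal{T}\to\omega$: first it stratifies $\mathcal{T}$ into layers $\Pi_n,\Sigma_n$ according to the $\Pi/\Sigma$-nesting depth of the normal form, proves these layers are disjoint and conversion-invariant, and then defines $\varphi$ recursively by $\varphi(Prop)=2$, $\varphi(Type_j)=3+j$, $\varphi(T)=1$ for other base terms, and $\varphi(\Pi x{:}A.B)=\varphi(\Sigma x{:}A.B)=\varphi(A)\cdot\varphi(B)$. The main lemma is that $M\prec_i N$ implies $\varphi(M)<\varphi(N)$, proved by induction on $i$; well-foundedness of $\omega$ then finishes.

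Your approach is more elementary: it reuses the template of the $\epsilon$-type theorem and avoids introducing the stratification $\Pi_n,\Sigma_n$ and the rank function altogether. In the $\Sigma$-case you are also more careful than the paper's own Case~(4) of the preceding theorem, correctly using a pigeonhole/subsequence extraction rather than asserting that one of the two component chains is globally strict. What the paper's approach buys in exchange is a concrete numerical invariant: $\varphi(A_0)$ gives an a priori bound on the length of any descending chain from $A_0$, and the conversion-invariant stratification is a reusable piece of infrastructure. Your structural-induction proof yields the bare well-foundedness statement with less machinery but no such quantitative byproduct.
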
 
 
 To show this, we will define some notions and establish their basic properties.
 
 \begin{definition}
  Let $\mathcal{T}$ be the set of all terms which have normal forms.\\
  Then, the following stratification of $\mathcal{T}$ is defined:
  \begin{align*}
    \Pi_0 &:= \Sigma_0 := \mathcal{T} \setminus \overline{\{ A \in \mathcal{T} \mid \mbox{$A$ has $\Pi$ or $\Sigma$ as its outermost symbol.} \} }\\
    \Pi_{n+1}&:= \overline{\{ \Pi x: A_1. A_2 \in \mathcal{T} \mid \exists k,l \leq n\ (A_1 \in \Pi_k \cup \Sigma_k)\&(A_2 \in \Pi_l \cup \Sigma_l)\&(k+l=n) \}}\\
    \Sigma_{n+1}&:= \overline{\{ \Sigma x: A_1. A_2 \in \mathcal{T} \mid \exists k,l \leq n\ (A_1 \in \Pi_k \cup \Sigma_k)\&(A_2 \in \Pi_l \cup \Sigma_l)\&(k+l=n) \}}    
   \end{align*}
      (where $\overline{S}$ denotes the closure of $S$ with respect to conversions.)
 \end{definition}
 
 The following properties are verified:
 \begin{lemma}\label{basics}
   \begin{enumerate}
    \item \label{sub} $\forall n \in \omega.\ \Pi_n, \Sigma_n \subset \mathcal{T}$.
    \item \label{cong} Let $n \in \omega$, and $M,N \in \mathcal{T}$. Then, the following holds: 
    \begin{align*}
    N \simeq M, M \in \Pi_n \Rightarrow N \in \Pi_n
    \end{align*}
    This also holds for $\Sigma_n$.
    \item \label{union} $\mathcal{T} = \bigcup_{n \in \omega} (\Pi_n \cup \Sigma_n)$.
    \item \label{psdisj} $\forall n \geq1.\ \Pi_n \cap \Sigma_n = \emptyset $.
    \item \label{ppdisj} Let $i,j \in \omega$. Then, the following holds; 
    \begin{align*}
    M \in \Pi_i, N \in \Pi_j , M \simeq N \Rightarrow i=j
    \end{align*}
    \item \label{ssdisj} Let $i,j \in \omega$. Then, the following holds; 
    \begin{align*}
    M \in \Sigma_i, N \in \Sigma_j , M \simeq N \Rightarrow i=j
    \end{align*}
   \end{enumerate}
 \end{lemma}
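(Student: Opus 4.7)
The plan is to dispatch items (1) and (2) directly from the definitions, item (3) by structural induction on normal forms, and items (4)--(6) through a single characterization of membership in $\Pi_n$ (resp.\ $\Sigma_n$) via the head constructor and a ``depth'' of the normal form.

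For item (1), first note that $\mathcal{T}$ itself is closed under $\simeq$: if $A \simeq B$ and $A$ has normal form $A^{*}$, Church--Rosser forces $B$ to reduce to $A^{*}$, so $B \in \mathcal{T}$. Hence every conversion closure $\overline{S}$ appearing in the definition stays inside $\mathcal{T}$, and $\Pi_0 = \Sigma_0$ lies in $\mathcal{T}$ by construction as a complement. Item (2) is immediate: for $n \geq 1$, $\Pi_n$ and $\Sigma_n$ are by definition $\simeq$-closures; for $n = 0$, the complement inside $\mathcal{T}$ of a $\simeq$-closed subset is again $\simeq$-closed. For item (3), I would induct on the structure of the normal form $\mathrm{nf}(A)$ of $A \in \mathcal{T}$: if $\mathrm{nf}(A)$ is neither $\Pi$- nor $\Sigma$-headed, then $A \in \Pi_0 = \Sigma_0$ at once; otherwise $\mathrm{nf}(A) = Q x : B_1.\, B_2$ with $Q \in \{\Pi, \Sigma\}$ and with $B_1, B_2$ already in normal form, and the inductive hypothesis places $B_i \in \Pi_{k_i} \cup \Sigma_{k_i}$, so $\mathrm{nf}(A)$ (hence $A$, by item (2)) lies in $Q_{k_1 + k_2 + 1}$.

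For items (4)--(6), the key step is to introduce a $\simeq$-invariant depth function $\nu : \mathcal{T} \to \omega$ defined inductively on the normal form: $\nu(M) := 0$ if $\mathrm{nf}(M)$ is not $\Pi/\Sigma$-headed, and $\nu(M) := \nu(B_1) + \nu(B_2) + 1$ if $\mathrm{nf}(M) = Q x : B_1.\, B_2$ for $Q \in \{\Pi,\Sigma\}$; Church--Rosser ensures both well-definedness and $\simeq$-invariance. The central claim, proved by strong induction on $n$, is: $M \in \Pi_n$ iff either ($n = 0$ and $\mathrm{nf}(M)$ is not $\Pi/\Sigma$-headed) or ($n \geq 1$ and $\mathrm{nf}(M)$ is $\Pi$-headed with $\nu(M) = n$); symmetrically for $\Sigma_n$. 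Granting this, item (4) is immediate because a normal form cannot simultaneously be $\Pi$- and $\Sigma$-headed, and items (5), (6) follow because the index coincides with the $\simeq$-invariant $\nu(M)$.

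The main obstacle will be the inductive step of the characterization, which must be carried out for $\Pi$ and $\Sigma$ in lockstep since subterms of a $\Pi$-headed normal form may themselves be $\Sigma$-headed. In the forward direction, expanding $M \in \Pi_{n+1}$ yields $M \simeq \Pi x : A_1.\, A_2$ with $A_1 \in \Pi_k \cup \Sigma_k$ and $A_2 \in \Pi_l \cup \Sigma_l$ for some $k + l = n$; Church--Rosser gives $\mathrm{nf}(M) = \Pi x : \mathrm{nf}(A_1).\, \mathrm{nf}(A_2)$, and the IH yields $\nu(A_1) = k$, $\nu(A_2) = l$, so $\nu(M) = n + 1$. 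Conversely, from $\mathrm{nf}(M) = \Pi x : B_1.\, B_2$ with $\nu(M) = n + 1$, take $k := \nu(B_1)$, $l := \nu(B_2)$; since $k, l \leq n$, the IH places $B_i$ into the appropriate strata, so $\Pi x : B_1.\, B_2 \in \Pi_{n+1}$ by definition, and $M \in \Pi_{n+1}$ by item (2).
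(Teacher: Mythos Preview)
Your proposal is correct. Items (1)--(3) are handled essentially as in the paper, and there is nothing to add there.

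For items (4)--(6) your route differs from the paper's. The paper dispatches (4) directly from Church--Rosser and then proves (5) by a bare induction on $i$: from $M\simeq\Pi x{:}M_1.M_2$ and $N\simeq\Pi x{:}N_1.N_2$ one gets $M_1\simeq N_1$, $M_2\simeq N_2$ by Church--Rosser and concludes $k_1=l_1$, $k_2=l_2$ by the induction hypothesis; (6) is declared analogous. You instead package the invariant into an explicit $\simeq$-stable depth function $\nu$ and prove a full characterization of $\Pi_n$ and $\Sigma_n$ in terms of the head constructor of the normal form together with $\nu$, from which (4)--(6) drop out at once. The payoff of your formulation is twofold: it makes transparent that (5) and (6) must be argued simultaneously (since a $\Pi$-headed term may have $\Sigma$-headed components, so the induction hypothesis for $\Pi_k$ alone is not enough---a point the paper's write-up elides), and it yields a reusable numerical invariant rather than a one-off induction. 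The paper's argument is slightly shorter on the page, but your decomposition is cleaner and would also feed directly into the construction of the ranking map $\varphi$ used later.
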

 
 \begin{proof}
 (\ref{sub}), (\ref{cong}) : clear.\\
 \quad (\ref{union}) : By (\ref{sub}), it suffices to show that every $A \in \mathcal{T}$ is in $\Pi_n$ or $\Sigma_n$ for some $n \in \omega$. This can be proved by induction on the construction of the normal form of $A$. \\
 \null \quad When $A$ is convertible to a constant or variable, $A \in \Pi_0=\Sigma_0$ by Church-Rosser Theorem.\\
 \null \quad Similarly, when the normal form of $A$ is in the form of $(\lambda x:M. N)$ or $\langle M,N\rangle_B$ (where $M$, $N$ and $B$ are in their normal forms), $A \in \Pi_0=\Sigma_0$ by Church-Rosser Theorem.\\
 \null \quad Similar arguments can be applied to the case when the normal form of $A$ is in the form of $MN$ or $\pi_i(M)$ (where $M$ and $N$ are in their normal forms, $i =1,2$), since they do not form redexes.\\
 \null \quad When the normal form of $A$ is in the form of $\Pi x:M.N$ or $\Sigma x:M.N $ (where $M$, $N$ are in their normal forms), then, by the induction hypothesis, $M \in \Pi_k \cup \Sigma_k$ and $N \in \Pi_l \cup \Sigma_l$ for some $k,l \in \omega$. Hence, $A \in \Pi_{k+l+1}\cup \Sigma_{k+l+1}$.\\
 (\ref{psdisj}) : This easily follows from Church-Rosser Theorem.\\
 (\ref{ppdisj}) : This can be proved by induction on $i$.\\
 \null \quad When $i=0$, the statement follows immediately from the definitions of $\Pi_n$'s. \\
 \null \quad When $i>0$, suppose $M \simeq N$ and $M \in \Pi_i$, $N \in \Pi_j$. Since the case $j=0$ has already been dealt with, we may assume that $j>0$. Then, $M$ and $N$ can be written as:
  \begin{align*}
   M \simeq \Pi x:M_1. M_2,\ N \simeq \Pi x: N_1. N_2
  \end{align*}
  (where $M_1 \in \Pi_{k_1}$, $M_2 \in \Pi_{k_2}$ for some $k_1,k_2$ such that $k_1+k_2=i-1$, and 
  $N_1 \in \Pi_{l_1}$, $N_2 \in \Pi_{l_2}$ for some $l_1,l_2$ such that $l_1+l_2=j-1$)\\
  \quad Since $M \simeq N$, using Church-Rosser theorem, we get $M_1 \simeq N_1$ and $M_2 \simeq N_2$. Hence, $k_1=l_1$ and $k_2 =l_2$ follow from the induction hypothesis. Therefore, $i=k_1+k_2+1=l_1+l_2+1=j$.\\
  (\ref{ssdisj}): This can be proved analogously to (\ref{ppdisj}).
 \end{proof} 
  
 Now, we are ready to prove Theorem \ref{nfwf}.
 
 \begin{proof}[Proof of Theorem \ref{nfwf}]
   We will define a map
  \begin{align*}
   \varphi \colon \mathcal{T} \rightarrow \omega
  \end{align*}
  such that $A \prec B \Rightarrow \varphi(A) < \varphi(B)$. Then, the well-foundedness of $\omega$ will imply that of $\mathcal{T}$. \\
  \null \quad For $T \in \mathcal{T}$, $\varphi(T) \in \omega$ is defined as follows:
   \begin{enumerate}
    \item Consider the case in which $T \in \Pi_0 =\Sigma_0$.
     \begin{enumerate}
      \item When $T \simeq Prop$, $\varphi(T):=2$.
      \item When $T \simeq Type_j$ (where $j \in \omega$), $\varphi(T):=3+j$.
      \item Otherwise, $\varphi(T):=1$.
     \end{enumerate}
     $\varphi$ is well-defined on $\Pi_0=\Sigma_0$ by Church-Rosser Theorem, and clearly $\varphi$ is invariant with respect to conversions.
    \item Suppose $\varphi$ has been defined on $\bigcup_{j \leq n}(\Pi_j \cup \Sigma_j)$, and is invariant with respect to conversions. Consider the case in which $T \in \Pi_{n+1}$.
     $T$ can be written as $T \simeq \Pi x :A. B$ (where $A \in \Pi_k \cup \Sigma_k$ and $B \in \Pi_l \cup \Sigma_l$ for some $k,l \in \omega$ such that $k+l=n$). Using this, let $\varphi(T) := \varphi(A) \varphi(B)$ (this value does not depend on the choice of $A$ and $B$. Suppose $T \simeq \Pi x :A^{\prime}. B^{\prime}$. Then, we get $A \simeq A^{\prime}$ and $B \simeq B^{\prime}$ by Church-Rosser Theorem. Since $\varphi$ is invariant with respect to conversions on $\bigcup_{j \leq n}(\Pi_j \cup \Sigma_j)$, $\varphi(A)=\varphi(A^{\prime})$ and $\varphi(B)=\varphi(B^{\prime})$ hold).\\
     \null \quad The case in which $T \in \Sigma_{n+1}$ is to be dealt with similarly.\\
     \null \quad The argument that showed the well-definedness of $\varphi$ on $\Pi_{n+1} \cup \Sigma_{n+1}$ also show the invariance of $\varphi$ with respect to conversions. 
     Notice also that $\varphi$ is a function on $\bigcup_{j \leq n+1}(\Pi_j \cup \Sigma_j)$ since $\Pi_i$'s, $\Sigma_i$'s ($i \geq 1$) and $\Pi_0=\Sigma_0$ are mutually disjoint by Lemma \ref{basics} (\ref{psdisj}), (\ref{ppdisj}), (\ref{ssdisj}).  
   \end{enumerate}
   
   Now, $\varphi \colon \mathcal{T} \rightarrow \omega$ has been defined. Let us show that $M \prec N$ implies $\varphi(M) < \varphi(N)$ for $M,N \in \mathcal{T}$. It suffices to show the following claim by induction on $i$: \\
   \centerline{Let $i \in \omega$ and $M,N \in \mathcal{T}$. Then, $M \prec_i N$ implies $\varphi(M) < \varphi(N)$.}\\
   \quad When $i=0$, $M \prec_i N$ means $M$ and $N$ are both universes and $M \prec N$. 
   Hence, $\varphi(M) < \varphi(N)$ by the definition of $\varphi$.\\
   \quad When $i>0$, $M \prec_i N$ means ``$M \prec_{i-1} N$, or $M \simeq Q x:M_1.M_2 \& N \simeq Q x:N_1.N_2$ (where $M_j \preceq_{i-1} N_j$ for each $j$, $M_j \prec_{i-1} N_j$ for some $j$, and $Q=\Pi$ or $\Sigma$)."
   In the former case, the claim follows from the induction hypothesis. 
   In the latter case, $M_1,M_2,N_1,N_2 \in \mathcal{T}$ since $M,N \in \mathcal{T}$ and Church-Rosser Theorem. Since $M_1 \prec_{i-1} N_1$ or $M_2 \prec_{i-1} N_2$ (else, $M \simeq N$), we get $\varphi(M_1)\varphi(M_2) < \varphi(N_1)\varphi(N_2)$ by induction hypothesis (notice that the value of $\varphi$ is always positive). By the definition of $\varphi$, it means $\varphi(M)< \varphi(N)$ (to show that $\varphi(M)=\varphi(M_1)\varphi(M_2)$ etc, we use Church-Rosser Theorem again).\\
   \quad This ends the proof. 
 \end{proof}

\section{Existence of Principal types}
 \quad In this section, we will prove (without using well-foundedness established above) that for every valid context $\Gamma$, each $\Gamma$- object has a principal type. Also, the arguments below exhibits how to compute them.
\subsection{Preliminaries}
\begin{lemma}
\label{lemma: 101}
If $A \preceq A'$, then for every variable $x$ and term $N$ we have $[N/x]A \preceq [N/x]A'$.
\qed
\end{lemma}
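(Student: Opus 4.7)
The plan is to prove the statement by induction on the level $i$ in the stratification $\preceq = \bigcup_{i \in \omega} \preceq_i$; that is, to prove the stronger claim that for every $i$, if $A \preceq_i A'$ then $[N/x]A \preceq_i [N/x]A'$. The original statement then follows immediately, and we do not need to analyse arbitrary derivations since we already have a level-indexed definition available.

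For the base case $i=0$ there are two subcases to handle. First, if the step comes from computational equality, one uses the standard fact that substitution preserves one-step $\beta\iota$-reduction and hence the symmetric-transitive closure $\simeq$; so $A \simeq A'$ gives $[N/x]A \simeq [N/x]A'$. Second, if the step comes from the universe ordering (something like $A \simeq \mathrm{Prop}$ and $A' \simeq \mathrm{Type}_j$, or $A \simeq \mathrm{Type}_j$ and $A' \simeq \mathrm{Type}_k$ with $j \leq k$), one observes that the universes are closed, so substituting in $A$ and $A'$ leaves the pertinent universe constants unchanged up to conversion by the previous subcase. Combining these two observations covers all of $\preceq_0$.

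For the inductive step we use the congruence clauses of $\preceq_{i+1}$. Suppose $A \preceq_{i+1} A'$ arises from the $\Pi$-clause, so that $A \simeq \Pi y:C.D$ and $A' \simeq \Pi y:C'.D'$ with the appropriate premises $C,C'$ and $D,D'$ related at level $\preceq_i$ (the same pattern applies to the $\Sigma$-clause). Using $\alpha$-conversion we may assume $y \neq x$ and $y \notin \mathrm{FV}(N)$, so that substitution commutes with the binder: $[N/x](\Pi y:C.D) \equiv \Pi y:[N/x]C.[N/x]D$, and similarly for $A'$. The induction hypothesis, applied to each premise, yields the corresponding $\preceq_i$-relations between the substituted subterms, and then the same congruence clause of $\preceq_{i+1}$ gives $[N/x]A \preceq_{i+1} [N/x]A'$; invariance of $\preceq_{i+1}$ under conversion, which has already been treated in the base case, discharges the use of $\simeq$ in the outer form.

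The argument is essentially a standard substitution lemma, so no step is genuinely hard; the only mildly delicate point is the bookkeeping for bound variables in the $\Pi$ and $\Sigma$ cases, which is handled by the usual Barendregt convention. Should the definition of $\preceq_i$ in \cite{Luo} turn out to impose a sub-clause (for instance a contravariant domain condition on $\Pi$), the same induction still goes through verbatim since substitution acts uniformly on all subterms, regardless of variance.
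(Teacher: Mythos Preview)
Your argument is correct and is exactly the standard one: induction on the index $i$ of the stratification $\preceq_i$, using that substitution commutes with reduction (hence with $\simeq$) for the base clause, that universes are closed terms for the universe clause, and the Barendregt convention plus the induction hypothesis for the $\Pi$/$\Sigma$ congruence clauses. There is no gap.

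As for comparison with the paper: note that the paper does not actually give a proof of this lemma at all---it is stated with an immediate \qed\ and left to the reader as a routine substitution lemma. So your write-up is not an alternative route but rather a spelled-out version of what the paper tacitly assumes. Your closing remark about variance is well placed: in Luo's definition the $\Pi$-domain is required to be convertible (not merely $\preceq_i$-related), but as you observe, substitution preserves $\simeq$ just as well as $\preceq_i$, so the induction goes through unchanged.
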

\begin{lemma}
\label{lemma: 102}
For every term $A$ and universe $U$ with $A \preceq U$, there is a universe $U' \preceq U$ 
such that $A \simeq U'$.
\qed
\end{lemma}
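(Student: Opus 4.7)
The plan is to argue by induction on the stratification of the cumulativity relation, viewing $\preceq$ as $\bigcup_{i \in \omega} \preceq_i$ (the same layering that was used implicitly in the Proposition above, where case (c) of $\preceq_1$ played the decisive role). The key conceptual point is that because $U$ is a universe, the Church--Rosser theorem rules out every clause in the definition of $\preceq_i$ whose right-hand component would force $U$ to be convertible to a $\Pi$- or $\Sigma$-type; hence only the clauses arising from conversion and from the universe hierarchy $Prop \preceq Type_0 \preceq Type_1 \preceq \cdots$ can appear at the top of the derivation.

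In the base case $i = 0$, $A \preceq_0 U$ therefore forces $A$ itself to be convertible to some universe $U'$ with $U' \preceq U$ in the hierarchy; taking this $U'$ finishes the case. For the inductive step $i = n + 1$, I would go through the defining clauses of $\preceq_{n+1}$: the clause that reduces to $\preceq_n$ follows directly from the induction hypothesis; the clauses arising from conversion on either side use that $\preceq$ is preserved under $\simeq$ together with the induction hypothesis; and the structural clauses for $\Pi$ or $\Sigma$, which would give $A \simeq Q x : A_1. A_2$ and $U \simeq Q x : B_1. B_2$ for some $Q \in \{\Pi, \Sigma\}$, are excluded by Church--Rosser applied to the universe $U$.

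The main obstacle I expect is keeping the case analysis on $\preceq_{n+1}$ watertight, since the definition has several interacting clauses and one must ensure that each clause whose outcome lands in a universe really does so only via conversion or the universe hierarchy. Everything hinges on the interaction of $\simeq$ with the structural clauses; once Church--Rosser is invoked to eliminate the $\Pi$/$\Sigma$ cases, the conclusion that $A \simeq U'$ for some universe $U' \preceq U$ falls out directly from the remaining clauses.
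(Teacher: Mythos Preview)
The paper gives no proof of this lemma at all: it is stated with an immediate \qed, treated as a routine consequence of the definition of $\preceq_i$ and Church--Rosser. Your proposed argument is correct and is exactly the expected unpacking of that \qed --- induction on $i$ in $\preceq = \bigcup_i \preceq_i$, with Church--Rosser eliminating the $\Pi$/$\Sigma$ clauses because a universe cannot be convertible to a term of those shapes, leaving only the conversion and universe-hierarchy clauses at level $0$.
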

\begin{notation}
We write $Type_{-1}$ for $Prop$ and $Type$ for $Type_j$ for some $j \geq 0$.
\end{notation}
We put * to the references of theorems from \cite{Luo} (i.e., Theorem 1.1* means the theorem 1.1 in \cite{Luo}).
\subsection{Proofs}
\quad To prove the existence of the principal type of every $\Gamma$-object, we restrict the type derivation system of $ECC$
to the system we call $ECC^-$ defined as follows.
\begin{definition}
The rules of $ECC^-$ are those of $ECC$ with $(\Pi2)(\Sigma)(app)(pair)(\preceq)$ replaced by the following: 
\begin{prooftree}
	\AxiomC{$\Gamma \vdash A : Type_j$}
	\AxiomC{$\Gamma, x : A \vdash B : Type_k$}
	\LeftLabel{$(\Pi2')$}
	\RightLabel{$(l = \max\{j, k, 0\} \& k \geq 0)$}
	\BinaryInfC{$\Gamma \vdash \Pi x : A. B : Type_l$}
\end{prooftree}
\begin{prooftree}
	\AxiomC{$\Gamma \vdash A : Type_j$}
	\AxiomC{$\Gamma, x : A \vdash B : Type_k$}
	\LeftLabel{$(\Sigma')$}
	\RightLabel{$(l = \max\{j, k, 0\})$}
	\BinaryInfC{$\Gamma \vdash \Sigma x : A. B : Type_l$}
\end{prooftree}
\begin{prooftree}
	\AxiomC{$\Gamma \vdash M : \Pi x : A. B$}
	\AxiomC{$\Gamma \vdash N : A'$}
	\LeftLabel{$(app')$}
	\RightLabel{$(A' \preceq A)$}
	\BinaryInfC{$\Gamma \vdash MN : [N/x]B$}
\end{prooftree}
\begin{prooftree}
	\AxiomC{$\Gamma \vdash M : A$}
	\AxiomC{$\Gamma \vdash N : C$}
	\AxiomC{$\Gamma, x : A' \vdash B' : Type$}
	\LeftLabel{$(pair')$}
	\RightLabel{$(A \preceq A',\ C \preceq [M/x]B')$}
	\TrinaryInfC{$\Gamma \vdash \langle M, N \rangle_{\Sigma x : A'. B'} : \Sigma x : A'. B'$}
\end{prooftree}
In addition, the following scheme $(\simeq)_{\rho}$ is added:
\begin{prooftree}
	\AxiomC{$\Gamma \vdash M : A$}
	\LeftLabel{$(\simeq)_{\rho}$}
	\RightLabel{$(A \simeq A'$, and $\rho$ is an \underline{$ECC$-derivation} of $\Gamma \vdash A' : Type)$}
	\UnaryInfC{$\Gamma \vdash M : A'$}
\end{prooftree}
(notice that when $(\simeq)_{\rho}$ is applied, an $ECC$-derivation $\rho$ must be specified).
\qed
\end{definition}
It is easy to see that an $ECC^-$-derivation can be naturally simulated by an $ECC$-derivation:
\begin{definition}
For an $ECC^-$-derivation $\theta$, $\mathcal{F}(\theta)$ denotes an $ECC$-derivation constructed inductively as follows:
 \begin{enumerate}
     \item Replace $(\Pi2')$ such as 
     \begin{prooftree}
     \AxiomC{$\vdots\ \delta_1$}
	 \UnaryInfC{$\Gamma \vdash A : Type_j$}
	 \AxiomC{$\vdots\ \delta_2$}
	 \UnaryInfC{$\Gamma, x : A \vdash B : Type_k$}
	 \LeftLabel{$(\Pi2')$}
	 \RightLabel{$(l = \max\{j, k, 0\} \& k \geq 0)$}
	 \BinaryInfC{$\Gamma \vdash \Pi x : A. B : Type_l$}
     \end{prooftree}
     with $(\Pi2)$ as follows:
     \begin{prooftree}
     \AxiomC{$\vdots\ \mathcal{F}(\delta_1)$}
	 \UnaryInfC{$\Gamma \vdash A : Type_j$}
	   \AxiomC{$\vdots\ \delta_3$}
	   \UnaryInfC{$\Gamma \vdash Type_l:Type$}
	  \RightLabel{$(\preceq)$}
	 \BinaryInfC{$\Gamma \vdash A:Type_l$}
	 \AxiomC{$\vdots\ \mathcal{F}(\delta_2)$}
	 \UnaryInfC{$\Gamma, x : A \vdash B : Type_k$}
	  \AxiomC{$\vdots\ \delta_4$}
	  \UnaryInfC{$\Gamma, x:A \vdash Type_l:Type$}
	  \RightLabel{$(\preceq)$}
	 \BinaryInfC{$\Gamma, x:A \vdash B:Type_l$}
	 \RightLabel{$(\Pi2)$}
	 \BinaryInfC{$\Gamma \vdash \Pi x : A. B : Type_l$}
     \end{prooftree}
     ($\delta_3,\delta_4$ above can be obtained by $(Ax)(T)$ and Lemma 3.12* for $ECC$).
     \item Replace $(\Sigma')$ with $(\Sigma)$ in a similar way.
     \item Replace $(app')$ such as
     \begin{prooftree}
      \AxiomC{$\vdots\ \delta_1$}
	  \UnaryInfC{$\Gamma \vdash M : \Pi x : A. B$}
	   \AxiomC{$\vdots\ \delta_2$}
	   \UnaryInfC{$\Gamma \vdash N : A'$}
	  \LeftLabel{$(app')$}
	  \RightLabel{$(A' \preceq A)$}
	  \BinaryInfC{$\Gamma \vdash MN : [N/x]B$}
\end{prooftree}
 with $(app)$ as follows:
 \begin{prooftree}
      \AxiomC{$\vdots\ \mathcal{F}(\delta_1)$}
	  \UnaryInfC{$\Gamma \vdash M : \Pi x : A. B$}
	   \AxiomC{$\vdots\ \mathcal{F}(\delta_2)$}
	   \UnaryInfC{$\Gamma \vdash N : A'$}
	     \AxiomC{$\vdots\ \delta_3$}
	     \UnaryInfC{$\Gamma \vdash A:Type$}
	  \RightLabel{$(\preceq)$}
	  \BinaryInfC{$\Gamma \vdash N:A$}
	\RightLabel{$(app)$}
	\BinaryInfC{$\Gamma \vdash MN : [N/x]B$}
\end{prooftree}
(an appropriate $\delta_3$ exists by the following reason; first, $\mathcal{F}(\delta_1)$ and Theorem 3.15* for $ECC$ yields an $ECC$-derivation of $\Gamma \vdash \Pi x:A.B :U$ where $U$ denotes some universe. Then, it can easily be verified that this derivation has a subderivation which says $\Gamma \vdash A:U'$ where $U'$ denots another universe).
     \item Replace $(pair')$ with $(pair)$ in a similar way.
     \item Replace instances of $(\simeq)_{\rho}$ such as
     \begin{prooftree}
     \AxiomC{$\vdots \delta$}
	 \UnaryInfC{$\Gamma \vdash M : A$}
	  \LeftLabel{$(\simeq)_{\rho}$}
    	\RightLabel{$(A \simeq A'$, and $\rho$ is an $ECC$-derivation of $\Gamma \vdash A' : Type)$}
	  \UnaryInfC{$\Gamma \vdash M : A'$}
     \end{prooftree}
      with
    \begin{prooftree}
     \AxiomC{$\vdots \mathcal{F}(\delta)$}
	 \UnaryInfC{$\Gamma \vdash M : A$}
	  \AxiomC{$\vdots\ \rho$}
	  \UnaryInfC{$\Gamma \vdash A' : Type$}
	  \RightLabel{$(\preceq)$}
	  \BinaryInfC{$\Gamma \vdash M : A'$}
    \end{prooftree}
 \end{enumerate}
 Note that $\mathcal{F}(\theta)$ has the same conclusion as $\theta$.
\end{definition}
Also, it is easy to verify Lemma 3.10* for $ECC^-$ (just apply Lemma 3.10 for $ECC$ to $\mathcal{F}(\theta)$). We omit proofs.\\
\null \quad In $ECC^-$, every $\Gamma$-object has a unique type modulo conversion.
\begin{prop}\label{uni}
In $ECC^-$, $\Gamma \vdash M : A$ and $\Gamma \vdash M : A'$ implies $A \simeq A'$.
\end{prop}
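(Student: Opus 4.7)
The plan is to prove the proposition by induction on the structure of the derivation of $\Gamma \vdash M : A$ in $ECC^-$, case-splitting on the last rule applied and in each case classifying which rules could possibly conclude the second derivation $\Gamma \vdash M : A'$. The crucial design feature of $ECC^-$ is that, apart from $(\simeq)_{\rho}$, all its rules are syntax-directed: the head symbol of $M$ fixes the family of rules that could have been used last (variables via $(var)$, universes via $(Ax)$, $\Pi$- and $\Sigma$-types via $(\Pi1)/(\Pi2')$ or $(\Sigma')$, $\lambda$-abstractions via $(\lambda)$, applications via $(app')$, pairs via $(pair')$, and projections via $(\pi_1)/(\pi_2)$).

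Applications of $(\simeq)_{\rho}$ are absorbed immediately: if the last rule of either derivation is $(\simeq)_{\rho}$ with premise $\Gamma \vdash M : A_0$ and side condition $A_0 \simeq A$, the induction hypothesis on that premise together with transitivity of $\simeq$ yields the conclusion. So it suffices to compare derivations that both end with the syntax-directed rule matching the head of $M$. For variables, universes, $\lambda$-abstractions, pairs, and the $\Pi/\Sigma$-formation rules, the types are either read directly off the context and annotations or reduce to the induction hypothesis on the premises; the only subtlety in the latter group is that for $(\Pi2')/(\Sigma')$ the produced universe is $Type_l$ with $l = \max\{j,k,0\}$, and $j, k$ are pinned down by the induction hypothesis together with the fact that $Type_j \simeq Type_{j'}$ forces $j = j'$ by Church-Rosser. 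Similarly, if one derivation of a $\Pi$-formation used $(\Pi1)$ and the other $(\Pi2')$, applying the induction hypothesis to the premise typing the codomain rules this out, since $Prop$ and $Type_k$ are not $\simeq$-equivalent.

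The main obstacle is the application case, together with the analogous projection case. If $\Gamma \vdash M_1 M_2 : [M_2/x]B$ is concluded by $(app')$ from a premise $\Gamma \vdash M_1 : \Pi x:A.B$, and a second derivation concludes $\Gamma \vdash M_1 M_2 : [M_2/x]B_0$ from $\Gamma \vdash M_1 : \Pi x:A_0.B_0$, I would invoke the induction hypothesis on the premises typing $M_1$ to obtain $\Pi x:A.B \simeq \Pi x:A_0.B_0$, then apply Church-Rosser to conclude $A \simeq A_0$ and $B \simeq B_0$, whence $[M_2/x]B \simeq [M_2/x]B_0$ by preservation of $\simeq$ under substitution (which is immediate, or can be obtained by symmetrising Lemma \ref{lemma: 101}). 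This closes the induction and establishes the proposition.
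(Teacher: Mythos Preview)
Your overall strategy matches the paper's: a syntax-directed case analysis exploiting that the non-$(\simeq)_\rho$ rules of $ECC^-$ are determined by the head symbol of $M$, with the $(\Pi1)/(\Pi2')$ overlap ruled out via $Prop \not\simeq Type_k$, and the application and projection cases handled by Church--Rosser on $\Pi/\Sigma$-types together with compatibility of $\simeq$ with substitution.

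There is, however, a gap in your induction scheme. You induct on the structure of the \emph{first} derivation, yet claim that if the last rule of \emph{either} derivation is $(\simeq)_\rho$ you may apply the induction hypothesis to its premise. When it is the second derivation that ends with $(\simeq)_\rho$ (premise $\Gamma \vdash M : A'_0$, side condition $A'_0 \simeq A'$), the first derivation has not shrunk at all, so no hypothesis is available: you would be comparing the full first derivation against a proper subderivation of the second, which structural induction on the first alone does not license. The paper handles exactly this by inducting on the \emph{sum of the heights} of the two derivations; any symmetric measure (sum, maximum, or a nested induction with an inner induction on the second derivation) repairs your argument without further change. A minor omission: you do not list rules $(C)$ and $(T)$, which also type universe terms, but they behave like $(Ax)$ and cause no difficulty.
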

\begin{proof}
We prove by induction of the sum of the heights of the given two derivation trees. 
If one of the derivation ends with
$(\simeq)_{\rho}$, then we can directly apply the induction hypothesis. 
Thus, we may assume that neither of the last rules is not $(\simeq)_{\rho}$. 
The only pairs of rules which can be applied to a term of the same form is $(\Pi1)$ \& $(\Pi2')$. 
However, if they are the last two rules, then there would be a contradiction to the induction hypothesis
($Prop \simeq Type$).
Therefore, we may assume that the last rules of the two derivations are the same.
\par
If the last rules are among $(Ax)(C)(T)(\Pi1)$, then the assertion immediately follows
(note that Chuch-Rosser Theorem implies $C \simeq D \Rightarrow C \equiv D$).
The cases ($\Pi2'$)($\Sigma'$) follow from the induction hypothesis and the definitions of the rules.
For the rules $(\lambda)(\pi1)$ one can use the fact that
$\Pi[\Sigma] x : A. B \simeq \Pi[\Sigma] x : A'. B' \Leftrightarrow A \simeq A' \& B \simeq B'$.
Similarly, for $(app')(\pi2)$ we note that $B \simeq B' \Rightarrow [N/x]B \simeq [N/x]B'$ holds.
As for $(var)$, $A \equiv A'$ follows from Lemma 3.10* for $ECC^-$.
The rest is $(pair')$, which, by definition, always assign the same type $B$ to a term $\langle M,N \rangle_B$.
\end{proof}
The existence of the principal type for every $\Gamma$-object is an immediate consequence of the next proposition:
\begin{prop}\label{transform}
Every derivation tree 
\begin{prooftree}
	\AxiomC{$\vdots$}
	\UnaryInfC{$\Delta \vdash \alpha : \tau$}
\end{prooftree}
of $ECC$ can be transformed into an $ECC^-$-derivation of the following form:
\begin{prooftree}
	\AxiomC{$\vdots\ \mathcal{F}(\theta)$}
	\UnaryInfC{$\Delta \vdash \alpha : \tau'$}
		   \AxiomC{$\vdots\ \eta$}
	       \UnaryInfC{$\Delta \vdash \tau : Type$}
	\RightLabel{$(\preceq)$}
	\BinaryInfC{$\Delta \vdash \alpha : \tau$}
\end{prooftree}
where $\theta$ is an $ECC^-$-derivation deriving $\Delta \vdash \alpha:\tau'$, $\eta$ is an $ECC$-derivation deriving $\Delta \vdash \tau:Type$, and $\tau' \preceq \tau$ (recall that $Type$ denotes some $Type_j$).
\end{prop}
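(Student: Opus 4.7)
The plan is to proceed by induction on the height of the given $ECC$-derivation, performing a case analysis on the last rule and arranging that every intermediate use of cumulativity in $ECC$ is absorbed into the single final $(\preceq)$ step of the transformed derivation. For the structural rules $(Ax),(T),(var),(C)$ the corresponding $ECC^-$-rules are identical, so the induction step is a verbatim copy (one takes $\tau' \equiv \tau$ and builds $\eta$ from $(Ax)(T)$). For $(\Pi 1)$ the same approach works after a trivial adjustment through $(\simeq)_{\rho}$.

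For $(\Pi 2)$ and $(\Sigma)$, applying the induction hypothesis to the premises $\Gamma \vdash A : Type_j$ and $\Gamma, x:A \vdash B : Type_k$ yields $ECC^-$-derivations with reported types $\sigma_1 \preceq Type_j$ and $\sigma_2 \preceq Type_k$. Lemma~\ref{lemma: 102} then gives $\sigma_1 \simeq Type_{j'}$ and $\sigma_2 \simeq Type_{k'}$ with $j' \leq j$ and $k' \leq k$; we invoke $(\simeq)_{\rho}$ (the auxiliary $\rho$ being furnished by $(Ax)(T)$) to change the reported types, and then apply $(\Pi 2')$ or $(\Sigma')$, obtaining $\Pi x:A.B : Type_{l'}$ with $l' = \max\{j',k',0\} \leq l$, hence $Type_{l'} \preceq Type_l$. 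For the introduction rule $(\lambda)$, IH applied to the body-premise gives $M : B'$ with $B' \preceq B$, and the congruence of $\preceq$ in the codomain of $\Pi$ then furnishes $\Pi x:A.B' \preceq \Pi x:A.B$. For $(pair)$, the revised rule $(pair')$ has enough built-in slack ($A \preceq A'$ and $C \preceq [M/x]B'$) to absorb the cumulativity produced by IH on each of its three premises.

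For the elimination rules $(app), (\pi_1), (\pi_2)$, IH on the principal premise gives a type that is $\preceq$ the intended $\Pi$- or $\Sigma$-type. The shape-preservation property of $\preceq$, namely that $C \preceq \Pi x:A.B$ forces $C \simeq \Pi x:A'.B'$ with $A \simeq A'$ and $B' \preceq B$ (and analogously for $\Sigma$), is used to bring the principal premise into the correct form via $(\simeq)_{\rho}$; then $(app')$ absorbs the residual cumulativity on the argument's type coming from IH on the other premise, and Lemma~\ref{lemma: 101} yields $[N/x]B' \preceq [N/x]B$. Finally, the $(\preceq)$ case is what motivates the very form of the statement: if the last rule derives $\Gamma \vdash M : B$ from $\Gamma \vdash M : A$, $\Gamma \vdash B : U$, and $A \preceq B$, then IH on $M : A$ yields an $ECC^-$-derivation of $M : A''$ with $A'' \preceq A$, hence $A'' \preceq B$ by transitivity, while the original derivation of $B : U$ (composed with $(\preceq)$ from $U$ to $Type$ when $U = Prop$) supplies $\eta$.

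The main obstacle will be assembling the auxiliary $ECC$-derivations $\rho$ required by each application of $(\simeq)_{\rho}$, and, in the introduction-rule cases, re-deriving the typability of slightly altered formation types such as $\Pi x:A.B'$ whenever this is needed as a side premise; both tasks will rely on repeated invocation of Theorem 3.15* and Lemma 3.12* of \cite{Luo}, as already used in the construction of $\mathcal{F}$. The other subtle point is the shape-preservation lemma for $\preceq$, which is implicit in the definition of cumulativity but should be made explicit before being deployed in the elimination-rule cases.
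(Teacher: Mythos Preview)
Your proposal follows exactly the paper's approach: induction on the height of the $ECC$-derivation with case analysis on the last rule, using $(\simeq)_{\rho}$ together with Lemma~\ref{lemma: 102} to realign premise types to literal universes, the built-in slack of $(app')$ and $(pair')$ to absorb cumulativity in the elimination and pairing cases, shape-preservation of $\preceq$ for $(app),(\pi_1),(\pi_2)$, congruence for $(\lambda)$, and transitivity of $\preceq$ for the final $(\preceq)$ case, with the auxiliary $\rho$'s and $\eta$'s supplied by Theorem 3.15* and Lemma 3.12*. One small correction: the cases $(C),(T),(var)$ are not quite ``verbatim copies''---each has a premise of the form $\Gamma \vdash A : Type_j$, so after applying the induction hypothesis you still need the same $(\simeq)_{\rho}$ adjustment you already describe for $(\Pi 1)$ before re-applying the rule; with that amendment your outline coincides with the paper's proof.
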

\begin{proof}
The proof is by induction on the height of the given derivation tree of $ECC$.\\ 
\null \quad If the last rule in the derivation is $(Ax)$, it is obvious.\\
\null \quad Suppose the last rule is $(C)$.
The given derivation tree has the following form:
\begin{prooftree}
	\AxiomC{\vdots}
	\UnaryInfC{$\Gamma \vdash A:Type_j$}
		\RightLabel{$(C)$}
	\UnaryInfC{$\Gamma, x:A \vdash Prop:Type_0$}
\end{prooftree}

Applying the induction hypothesis, we get
\begin{prooftree}
	\AxiomC{$\vdots\ \mathcal{F}(\theta)$}
	\UnaryInfC{$\Gamma \vdash A : X$}
		   \AxiomC{$\vdots\ \eta$}
	       \UnaryInfC{$\Gamma \vdash Type_j : Type$}
	\RightLabel{$(\preceq)$}
	\BinaryInfC{$\Gamma \vdash A : Type_j$}
\end{prooftree}
Noting that $X \preceq Type_j$ yields $X \simeq Type_k$ for some $k \leq j$, we get an $ECC^-$-derivation $\theta'$:
\begin{prooftree}
	\AxiomC{$\vdots\ \theta$}
	\UnaryInfC{$\Gamma \vdash A : X$}
	\RightLabel{$(\simeq)_{\rho}$}
	\UnaryInfC{$\Gamma \vdash A : Type_k$}
	\RightLabel{$(C)$}
	\UnaryInfC{$\Gamma \vdash Prop:Type_0$}
\end{prooftree}
(An appropriate $\rho$ can be obtained by $(Ax)(T)$ and Lemma3.12* for $ECC$).
Now, 
\begin{prooftree}
	\AxiomC{$\vdots\ \mathcal{F}(\theta')$}
	\UnaryInfC{$\Gamma \vdash Prop:Type0$}
		   \AxiomC{$\vdots\ \eta'$}
	       \UnaryInfC{$\Gamma \vdash Type_0 : Type$}
	\RightLabel{$(\preceq)$}
	\BinaryInfC{$\Gamma \vdash Prop : Type_0$}
\end{prooftree}
($\eta'$ is obtained by $(Ax)(T)$ and Lemma 3.12* for $ECC$) gives the desired transformation.\\
\null \quad The case $(T)(var)(\Pi 1)$ are dealt with similarly.\\
\null \quad Consider the case $(\Sigma')$.
The given derivation has a form of:
\begin{prooftree}
	\AxiomC{$\vdots$}
	\UnaryInfC{$\Gamma \vdash A : Type_j$}
		   \AxiomC{$\vdots$}
	       \UnaryInfC{$\Gamma, x:A \vdash B : Type_j$}
	\RightLabel{$(\Sigma)$}
	\BinaryInfC{$\Gamma \vdash \Sigma x :A. B Type_j$}
\end{prooftree}
By the induction hypothesis, we get:
\begin{prooftree}
	\AxiomC{$\vdots\ \mathcal{F}(\theta_1)$}
	\UnaryInfC{$\Gamma \vdash A : X_1$}
		   \AxiomC{$\vdots\ \eta_1$}
	       \UnaryInfC{$\Gamma \vdash Type_j : Type$}
	\RightLabel{$(\preceq)$}
	\BinaryInfC{$\Gamma \vdash A : Type_j$}
\end{prooftree}
and
\begin{prooftree}
	\AxiomC{$\vdots\ \mathcal{F}(\theta_2)$}
	\UnaryInfC{$\Gamma, x:A \vdash B : X_2$}
		   \AxiomC{$\vdots\ \eta_2$}
	       \UnaryInfC{$\Gamma, x:A \vdash Type_j : Type$}
	\RightLabel{$(\preceq)$}
	\BinaryInfC{$\Gamma, x:A \vdash B : Type_j$}
\end{prooftree}
As in the previous argument, $X_1 \simeq Type_k$ and $X_2 \simeq Type_l$ for some $k,l \leq j$.
Using these, we get the following $ECC^-$-derivation $\theta'$:
\begin{prooftree}
	\AxiomC{$\vdots\ \theta_1$}
	\UnaryInfC{$\Gamma \vdash A : X_1$}
	\RightLabel{$(\simeq)_{\rho_1}$}
	\UnaryInfC{$\Gamma \vdash A:Type_k$}
		   \AxiomC{$\vdots\ \theta_2$}
	       \UnaryInfC{$\Gamma, x:A \vdash B : X_2$}
	       \RightLabel{$(\simeq)_{\rho_2}$}
           \UnaryInfC{$\Gamma, x:A \vdash B:Type_l$}
	\RightLabel{$(\Sigma')$}
	\BinaryInfC{$\Gamma \vdash \Sigma x:A.B : Type_{\max\{0,k,l\}}$}
\end{prooftree}
(again, $\rho_1, \rho_2$ can be obtained by $(Ax)(T)$ and Lemma 3.12* for $ECC$). Now, the following gives the desired transformation:
\begin{prooftree}
	\AxiomC{$\vdots\ \mathcal{F}(\theta')$}
	\UnaryInfC{$\Gamma \vdash \Sigma x:A.B : Type_{\max\{0,k,l\}}$}
		   \AxiomC{$\vdots\ \eta'$}
	       \UnaryInfC{$\Gamma \vdash Type_j : Type$}
	\RightLabel{$(\preceq)$}
	\BinaryInfC{$\Gamma \vdash \Sigma x:A.B: Type_j$}
\end{prooftree}
($\eta'$ is obtained by Lemma 3.12* for $ECC$).\\
\null \quad The case $(\Pi2')$ can be dealt with similarly.\\
\null \quad As for the case $(\lambda)$, the given derivation in in the form of:
\begin{prooftree}
	\AxiomC{$\vdots$}
	\UnaryInfC{$\Gamma, x:A \vdash M : B$}
	\RightLabel{$(\lambda)$}
	\UnaryInfC{$\Gamma \vdash \lambda x :A. M : \Pi x:A.B$}
\end{prooftree}
By the induction hypothesis, we get:
\begin{prooftree}
	\AxiomC{$\vdots\ \mathcal{F}(\theta)$}
	\UnaryInfC{$\Gamma, x:A \vdash M : B'$}
		   \AxiomC{$\vdots\ \eta$}
	       \UnaryInfC{$\Gamma,x:A \vdash B : Type$}
	\RightLabel{$(\preceq)$}
	\BinaryInfC{$\Gamma \vdash M : B$}
\end{prooftree}
So, we get the following $ECC^-$-derivation $\theta'$:
\begin{prooftree}
	\AxiomC{$\vdots\ \theta$}
	\UnaryInfC{$\Gamma, x:A \vdash M : B'$}
	\UnaryInfC{$\Gamma \vdash \lambda x:A. M : \Pi x:A. B'$}
\end{prooftree}
Now, the following gives the desired transformation:
\begin{prooftree}
	\AxiomC{$\vdots\ \mathcal{F}(\theta')$}
	\UnaryInfC{$\Gamma \vdash \lambda x:A.M : \Pi x:A.B'$}
		   \AxiomC{$\vdots\ \eta'$}
	       \UnaryInfC{$\Gamma \vdash \Pi x:A.B : Type$}
	\RightLabel{$(\preceq)$}
	\BinaryInfC{$\Gamma \vdash \lambda x:A.M: \Pi x:A.B$}
\end{prooftree}
(Note that $\Pi x:A.B' \preceq \Pi x:A.B$ follows from $B' \preceq B$. Also, $\eta'$ can be obtained by applying Theorem 3.15* to the given $ECC$-derivation above).\\
\null \quad Consider the case $(pair)$. The given $ECC$-derivation is as follows:
\begin{prooftree}
	\AxiomC{$\vdots$}
	\UnaryInfC{$\Gamma \vdash M : A$}
		\AxiomC{$\vdots$}
		\UnaryInfC{$\Gamma \vdash N : [M/x]B$}
			\AxiomC{$\vdots$}
			\UnaryInfC{$\Gamma,x:A \vdash B : Type_j$}
	\TrinaryInfC{$\Gamma \vdash \langle M, N \rangle_{\Sigma x : A. B} : \Sigma x: A. B$}
\end{prooftree}
The induction hypothesis yields the trees:
\begin{prooftree}
	\AxiomC{$\vdots\ \mathcal{F}(\theta_1)$}
	\UnaryInfC{$\Gamma \vdash M : A'$}
		   \AxiomC{$\vdots\ \eta_1$}
	       \UnaryInfC{$\Gamma \vdash A : Type$}
	\RightLabel{$(\preceq)$}
	\BinaryInfC{$\Gamma \vdash M : A$}
\end{prooftree}
\begin{prooftree}
	\AxiomC{$\vdots\ \mathcal{F}(\theta_2)$}
	\UnaryInfC{$\Gamma \vdash N : B'$}
		   \AxiomC{$\vdots\ \eta_2$}
	       \UnaryInfC{$\Gamma \vdash [M/x]B : Type$}
	\RightLabel{$(\preceq)$}
	\BinaryInfC{$\Gamma \vdash N : [M/x]B$}
\end{prooftree}
\begin{prooftree}
	\AxiomC{$\vdots\ \mathcal{F}(\theta_3)$}
	\UnaryInfC{$\Gamma, x:A \vdash B : X$}
		   \AxiomC{$\vdots\ \eta_3$}
	       \UnaryInfC{$\Gamma,x:A \vdash Type_j : Type$}
	\RightLabel{$(\preceq)$}
	\BinaryInfC{$\Gamma \vdash B : Type_j$}
\end{prooftree}
(again, $X \simeq Type_k$ for some $k \leq j$). Hence, we get the following $ECC^-$-derivation $\theta'$:
\begin{prooftree}
	\AxiomC{$\vdots\ \theta_1$}
	\UnaryInfC{$\Gamma \vdash M : A'$}
	   \AxiomC{$\vdots\ \theta_2$}
	   \UnaryInfC{$\Gamma \vdash N : B'$}
	     \AxiomC{$\vdots\ \theta_3$}
	     \UnaryInfC{$\Gamma, x:A \vdash B:X$}
	     \RightLabel{$(\simeq)_{\rho}$}
	     \UnaryInfC{$\Gamma,x:A \vdash B:Type_k$}
	\RightLabel{$(pair')$}
	\TrinaryInfC{$\Gamma \vdash \langle M,N \rangle_{\Sigma x:A.B} : \Sigma x:A.B$}
\end{prooftree}
(again, $\rho$ is obtained by $(Ax)(T)$ and Lemma 3.12* for $ECC$). Now, the following gives the desired transformation:
\begin{prooftree}
	\AxiomC{$\vdots\ \mathcal{F}(\theta')$}
	\UnaryInfC{$\Gamma \vdash \langle M,N \rangle_{\Sigma x:A.B} : \Sigma x:A.B$}
		   \AxiomC{$\vdots\ \eta'$}
	       \UnaryInfC{$\Gamma \vdash \Sigma x:A.B : Type$}
	\RightLabel{$(\preceq)$}
	\BinaryInfC{$\langle M,N \rangle_{\Sigma x:A.B} : \Sigma x:A.B$}
\end{prooftree}
(Note that $\eta'$ can be obtained by applying Lemma 3.15* to the given $ECC$-derivation above).\\
\null \quad Consider the case $(app)$.
\begin{prooftree}
	\AxiomC{$\vdots$}
	\UnaryInfC{$\Gamma \vdash M : \Pi x:A.B$}
		   \AxiomC{$\vdots$}
	       \UnaryInfC{$\Gamma \vdash N:A$}
	\RightLabel{$(app)$}
	\BinaryInfC{$\Gamma \vdash MN: [N/x]B$}
\end{prooftree}
By the induction hypothesis, we get:
\begin{prooftree}
	\AxiomC{$\vdots\ \mathcal{F}(\theta_1)$}
	\UnaryInfC{$\Gamma \vdash M : X_1$}
		   \AxiomC{$\vdots\ \eta_1$}
	       \UnaryInfC{$\Gamma \vdash \Pi x:A.B :Type$}
	\RightLabel{$(\preceq)$}
	\BinaryInfC{$\Gamma \vdash M : \Pi x:A.B$}
\end{prooftree}
and
\begin{prooftree}
	\AxiomC{$\vdots\ \mathcal{F}(\theta_2)$}
	\UnaryInfC{$\Gamma \vdash N : X_2$}
		   \AxiomC{$\vdots\ \eta_2$}
	       \UnaryInfC{$\Gamma \vdash A : Type$}
	\RightLabel{$(\preceq)$}
	\BinaryInfC{$\Gamma \vdash N : A$}
\end{prooftree}
(since $X_1 \preceq \Pi x:A.B$, $X_1 \simeq \Pi x:A.B_1$ for some $B_1 \preceq B$. Using Church-Rosser Theorem, we may assume that $X_1 \triangleright \Pi x:A.B_1$). 
Since $\mathcal{F}(\theta_1)$ derives $\Gamma \vdash M:X_1$, $X_1$ is a $\Gamma$-type by Lemma 3.15 for $ECC$. Hence, $\Pi x:A.B_1$ is also a $\Gamma$-type because of Theorem 3.16* for $ECC$.
Now, we can get the following $ECC^-$-derivation $\theta'$:
\begin{prooftree}
	\AxiomC{$\vdots\ \theta_1$}
	\UnaryInfC{$\Gamma \vdash M:X_1$}
	\RightLabel{$(\simeq)_{\rho}$}
	\UnaryInfC{$\Gamma \vdash M: \Pi x:A.B_1$}
		   \AxiomC{$\vdots\ \theta_2$}
	       \UnaryInfC{$\Gamma \vdash N:X_2$}
	\RightLabel{$(app')$}
	\BinaryInfC{$\Gamma \vdash MN: [N/x]B_1$}
\end{prooftree}
(An appropriate $\rho$ exists since $\Pi x:A. B_1$ is a $\Gamma$-type in $ECC$).
Then, the following derivation gives the desired transformation:
\begin{prooftree}
	\AxiomC{$\vdots\ \mathcal{F}(\theta')$}
	\UnaryInfC{$\Gamma \vdash MN : [N/x]B_1$}
		   \AxiomC{$\vdots\ \eta'$}
	       \UnaryInfC{$\Gamma \vdash [N/x]B : Type$}
	\RightLabel{$(\preceq)$}
	\BinaryInfC{$\Gamma \vdash MN : [N/x]B$}
\end{prooftree}
($\eta'$ is obtained by applying Theorem 3.15* to the given $ECC$-derivation. Note that $[N/x]B_1 \preceq [N/x]B$).\\
\null \quad The cases $(\pi 1)(\pi 2)$ can be dealt with in similar ways. \\
\null \quad As for the case $(\preceq)$, the given derivation is as follows:
\begin{prooftree}
	\AxiomC{$\vdots$}
	\UnaryInfC{$\Gamma \vdash M : A$}
		   \AxiomC{$\vdots\ \eta'$}
	       \UnaryInfC{$\Gamma \vdash B:Type$}
	\RightLabel{$(\preceq)$}
	\BinaryInfC{$\Gamma \vdash M: B$}
\end{prooftree}
By the induction hypothesis, we get:
\begin{prooftree}
	\AxiomC{$\vdots\ \mathcal{F}(\theta)$}
	\UnaryInfC{$\Gamma \vdash M : A'$}
		   \AxiomC{$\vdots\ \eta$}
	       \UnaryInfC{$\Gamma \vdash A :Type$}
	\RightLabel{$(\preceq)$}
	\BinaryInfC{$\Gamma \vdash M : A$}
\end{prooftree}
Since $A' \preceq A \preceq B$, we get
\begin{prooftree}
	\AxiomC{$\vdots\ \mathcal{F}(\theta)$}
	\UnaryInfC{$\Gamma \vdash M : A'$}
		   \AxiomC{$\vdots\ \eta'$}
	       \UnaryInfC{$\Gamma \vdash B :Type$}
	\RightLabel{$(\preceq)$}
	\BinaryInfC{$\Gamma \vdash M : B$}
\end{prooftree}
This completes the proof.
\end{proof}

\begin{theorem}
In $ECC$, every $\Gamma$-object has a principal type.
\end{theorem}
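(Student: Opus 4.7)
The plan is to combine the type uniqueness in $ECC^-$ (Proposition \ref{uni}) with the transformation result (Proposition \ref{transform}) and the simulation map $\mathcal{F}$ in a short diagram chase. Given a $\Gamma$-object $\alpha$ in $ECC$, there is some $\tau$ with an $ECC$-derivation of $\Gamma \vdash \alpha : \tau$. Applying Proposition \ref{transform} yields an $ECC^-$-derivation $\theta$ concluding $\Gamma \vdash \alpha : \tau'$ with $\tau' \preceq \tau$; then $\mathcal{F}(\theta)$ is an $ECC$-derivation of the very same judgement, so $\tau'$ is in particular a type of $\alpha$ in $ECC$.

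Next I would verify that $\tau'$ is a principal type. Let $\sigma$ be any other type of $\alpha$ in $ECC$, i.e. $\Gamma \vdash \alpha : \sigma$ is $ECC$-derivable. Applying Proposition \ref{transform} again produces an $ECC^-$-derivation $\theta^*$ of $\Gamma \vdash \alpha : \sigma'$ with $\sigma' \preceq \sigma$. Since $\theta$ and $\theta^*$ are both $ECC^-$-derivations of the same term $\alpha$ in the same context, Proposition \ref{uni} gives $\tau' \simeq \sigma'$, whence $\tau' \preceq \sigma' \preceq \sigma$. Thus $\tau'$ is a type of $\alpha$ that lies below every other type of $\alpha$ under $\preceq$, which is precisely the principality condition.

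The main obstacle has already been discharged by the preparatory work: Proposition \ref{transform} does the real labour of refactoring an arbitrary $ECC$-derivation into a canonical shape in which all cumulativity steps are pushed to a single trailing $(\preceq)$, and Proposition \ref{uni} shows that the annotation-free system $ECC^-$ assigns a type to each object uniquely up to $\simeq$. The only subtlety worth flagging in the write-up is that one must observe that $\mathcal{F}(\theta)$ recovers a bona fide $ECC$-derivation of $\Gamma \vdash \alpha : \tau'$ (so that $\tau'$ is indeed a type of $\alpha$ in the original system, not merely in the auxiliary one); with this noted, the theorem follows in a few lines and no further induction is required.
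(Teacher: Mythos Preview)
Your proposal is correct and follows essentially the same route as the paper: apply Proposition~\ref{transform} to obtain an $ECC^-$-derivation of $\Gamma \vdash \alpha : \tau'$ with $\tau' \preceq \tau$, use $\mathcal{F}$ to see that $\tau'$ is a genuine $ECC$-type of $\alpha$, and then invoke Proposition~\ref{uni} on a second application of Proposition~\ref{transform} to conclude $\tau' \simeq \sigma' \preceq \sigma$ for any other type $\sigma$. The only difference is cosmetic: the paper packages the use of $\mathcal{F}(\theta)$ inside the displayed output of Proposition~\ref{transform}, whereas you call it out explicitly as a separate observation.
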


\begin{proof}
 Assume that $\Gamma \vdash \alpha:\tau$ can be derived in $ECC$. Then, by Proposition \ref{transform}, we can obtain a derivation in the form of:
 \begin{prooftree}
	\AxiomC{$\vdots\ \mathcal{F}(\theta)$}
	\UnaryInfC{$\Gamma \vdash \alpha : \tau'$}
		   \AxiomC{$\vdots\ \eta$}
	       \UnaryInfC{$\Gamma \vdash \tau : Type$}
	\RightLabel{$(\preceq)$}
	\BinaryInfC{$\Gamma \vdash \alpha : \tau$}
\end{prooftree}
(where $\theta$ is an $ECC^-$ derivation deriving $\Gamma \vdash \alpha:\tau'$, $\eta$ is an $ECC$-derivation deriving $\Gamma \vdash \tau:Type$, and $\tau' \preceq \tau$).\\
\null \quad Let us show that this $\tau'$ is the principal type of $\alpha$ in $\Gamma$. Indeed, if $\Gamma \vdash \alpha : \sigma$ can be derived in $ECC$, then Proposition \ref{transform} yields an $ECC$-derivation of the following form:
 \begin{prooftree}
	\AxiomC{$\vdots\ \mathcal{F}(\theta_2)$}
	\UnaryInfC{$\Gamma \vdash \alpha : \sigma'$}
		   \AxiomC{$\vdots $}
	       \UnaryInfC{$\Gamma \vdash \sigma : Type$}
	\RightLabel{$(\preceq)$}
	\BinaryInfC{$\Gamma \vdash \alpha : \sigma$}
\end{prooftree}
(where $\theta_2$ is an $ECC^-$-derivation).\\
\null \quad Then, using Proposition \ref{uni}, we get $\tau' \simeq \sigma' \preceq \sigma$.
 
\end{proof}

\end{document}